\definecolor{darkred}{rgb}{0.5,0.0,0.0}
\newcommand{\aS}{\mathcal{X}}
\newcommand{\cS}{x}
\newcommand{\aY}{\mathcal{Y}}
\newcommand{\cW}{w}
\newcommand{\cond}{\varphi}
\newcommand{\conti}{\varepsilon}
\newcommand{\power}{\mathcal{P}}
\newcommand{\cert}{B}
\newcommand{\env}{f}
\newcommand{\sys}{\mathcal{F}}
\newcommand{\hori}{h}
\newcommand{\orac}{\mathbf{A}}
\newcommand{\moni}{\mathbf{M}}
\newcommand{\veri}{\mathbf{V}}
\newcommand{\RN}{\mathbb{R}}
\newcommand{\pRN}{\mathbb{R_+}}
\newcommand{\NN}{\mathbb{N}}
\newcommand{\pNN}{\mathbb{N}_+}
\begin{document}
\title{Formal Verification of Neural Certificates\\ Done Dynamically}
%
%
\author{
Thomas A. Henzinger\orcidlink{0000-0002-2985-7724}\and
Konstantin Kueffner\orcidlink{0000-0001-8974-2542}\and
Emily Yu\orcidlink{0000-0002-4993-773X} 
}
\authorrunning{T. Henzinger et al.}
%
\institute{Institute of Science and Technology Austria\\
\email{\{tah, kkueffner, zyu\}@ist.ac.at}}
\maketitle              
\begin{abstract}
Neural certificates have emerged as a powerful tool in cyber-physical systems control, providing witnesses of correctness. These certificates, such as barrier functions, often learned alongside control policies, once verified, serve as mathematical proofs of system safety. However, traditional formal verification of their defining conditions typically faces scalability challenges due to exhaustive state-space exploration. To address this challenge, we propose a lightweight runtime monitoring framework that integrates real-time verification and does not require access to the underlying control policy. Our monitor observes the system during deployment and performs on-the-fly verification of the certificate over a lookahead region to ensure safety within a finite prediction horizon.
We instantiate this framework for ReLU-based control barrier functions and demonstrate its practical effectiveness in a case study. Our approach enables timely detection of safety violations and incorrect certificates with minimal overhead, providing an effective but lightweight alternative to the static verification of the certificates.

\keywords{Runtime Monitoring  \and Neural Control \and Neural Certificates.}
\end{abstract}
\section{Introduction}
Forthcoming intelligent systems are increasingly integrated into industries and numerous application domains, including autonomous driving and medical image processing~\cite{abiodun2018state,anwar2018medical}. 
For example, deep reinforcement learning enables the automated synthesis of neural network controllers to address complex control tasks~\cite{kiumarsi2017optimal}.
However, in these safety-critical domains, ensuring the safety and correctness of machine-learned components presents significant challenges. 
Neural networks often lack transparency and explainability, undermining their trustworthiness. Without formal safety guarantees, the reliability and operation of cyber-physical systems remain constrained, thereby affecting their deployment in practice.

Certificate functions serve as mathematical proofs to establish the correctness of controllers. A certificate function~\cite{DBLP:journals/trob/DawsonGF23} is a mathematical mapping from system states to real values, where the satisfaction of its defining conditions guarantees that a desired system property holds. Notable examples include Lyapunov functions~\cite{DBLP:journals/automatica/Smith95}, used to prove stability with respect to a fixed point, and barrier functions~\cite{DBLP:journals/tac/PrajnaJP07}, which are employed to certify safety by characterizing forward-invariant sets. While the theory of Lyapunov functions has been extensively studied over the past several decades, recent advances leverage reinforcement learning to synthesize such certificates in the form of neural networks. A growing body of work in learning-based control explores the joint synthesis of both the certificate function and the control policy. We refer to~\cite{DBLP:journals/trob/DawsonGF23} for a more comprehensive overview of learning-based control using certificates.

Since certificate functions are often represented as neural networks, formal verification is typically employed to ensure their correctness. This gave rise to the learner-verifier framework~\cite{chang2019neural,DBLP:conf/tacas/ChatterjeeHLZ23,peruffo2021automated}, which is a synthesis framework to obtain valid neural certificates. This framework follows the style of Counterexample-Guided Inductive Synthesis (CEGIS). It consists of two main components: \emph{the learner}, which synthesizes both the control policy and the certificate function, and \emph{the verifier}, which checks whether the certificate satisfies its formal defining conditions. If the verifier identifies a counterexample, it is incorporated into the training set to refine the neural networks. This iterative loop continues until the certificate is successfully verified or a predefined termination criterion, such as a timeout, is met.

While formal verification offers correctness guarantees, it often suffers from scalability limitations due to its computational complexity, restricting its use in more complex control tasks. To overcome this challenge, \emph{runtime monitoring} has emerged as a promising complementary approach. A runtime monitor is a lightweight software module that operates in parallel to the controlled system, issuing warnings upon detecting violations of specified properties or certificate defining conditions. 
The Simplex architecture~\cite{crenshaw2007simplex,seto1998simplex}, for instance, employs a verified backup controller to override unsafe actions. 
Recent work has proposed a learner-monitor framework~\cite{yu2025neural}, analogous to the learner-verifier paradigm, in which the monitor continuously collects counterexamples during execution and incorporates them into the training data to refine the learned models.

\begin{figure}
    \centering
    \includegraphics[width=1\linewidth]{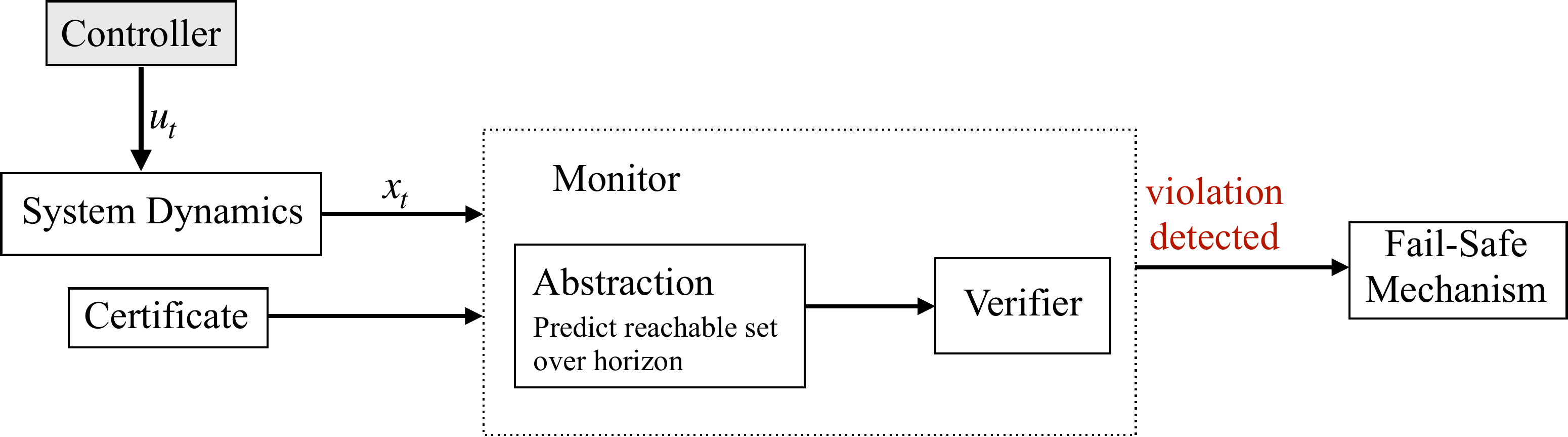}
\caption{Overview of our monitoring framework. At runtime, the monitor observes system states and computes an overapproximation of the reachable region using a local abstraction function. We do not assume access to the controller. A verifier checks whether the certificate remains valid in the lookahead region. If a violation is detected, a warning is issued and a fail-safe mechanism can be triggered.}
    \label{fig:enter-label}
\end{figure}

In this paper, we propose a general runtime monitoring framework that integrates partial static verification over a lookahead horizon to ensure the safety of the controlled systems, as illustrated in Fig.~\ref{fig:enter-label}. The framework is agnostic to the underlying controller and does not require access to future control inputs. At each time step, a monitor observes the system trajectory and uses a local abstraction function to over-approximate reachable states within the lookahead horizon. A certificate verifier then checks whether the safety certificate remains valid over this predicted region. If a violation is detected, the monitor issues a warning and can trigger a fail-safe response. Because our monitor detects violations ahead of time, the fail-safe mechanism is activated \emph{before} leaving the certified region. 

We instantiate this general framework using a specific verifier for ReLU-based control barrier functions (CBFs), leveraging the piecewise linear structure of ReLU networks for efficient localized verification. Our case study on a satellite rendezvous task demonstrate that the monitor can detect certificate violations as well as ensuring runtime safety with minimal overhead.

Overall, our contributions are as follows.
\begin{enumerate}
    \item We introduce a general framework for online verification of certificates, which leverages a local abstraction function and a verifier as a sub-routine to enable local verification over a finite lookahead horizon.
    \item We instantiate the framework for ReLU-based control barrier functions, and present a novel online verification algorithm that operates over neuron activation patterns. The algorithm adaptively constructs and verifies regions of the state space encountered at runtime, enabling fast detection of violations of safety as well as certificate defining conditions. 
\end{enumerate}
To the best of our knowledge, this is the first runtime monitoring approach that performs formal, on-the-fly certificate verification for neural-based control.

\section{Preliminaries}
We denote the set of natural number as $\NN$ and the set of real numbers as $\RN$. Their positive counterparts are denoted as $\pNN$ and $\pRN$ respectively.

Given a state space $\mathcal{X} \subseteq \mathbb{R}^m$ and an action space $\mathcal{U} \subseteq \mathbb{R}^n$, we consider continuous-time, control-affine dynamical systems of the form
\[
\dot{x} = f(x) + g(x) u,
\]
where $x\in \mathcal{X}, u\in \mathcal{U}, f : \mathbb{R}^m \to \mathbb{R}^m$ and $g : \mathbb{R}^m \to \mathbb{R}^{m \times n}$. A control policy $\pi : \mathcal{X} \to \mathcal{U}$ defines a closed-loop system $\mathcal{F} : \mathcal{X} \to \mathcal{X}$ with dynamics
\begin{align}
    \label{eq:system}
    \quad \mathcal{F}(x) = \env(x)+g(x) \pi(x).
\end{align}

The safe control problem requires that, under a given control policy, the system must never enter an unsafe state along any execution trace originating from an initial state. 
To prove that the process $\mathcal{F}$ satisfies properties such as safety or reachability, we rely on certificates. A certificate is a function $\cert\colon \aS\to \RN$ assigning a numerical value to each state. To be a valid certificate, this function $\cert$ must satisfy some property dependent conditions over the state space. To prove safety, a barrier function is needed.

\begin{definition}[Control Barrier Function]
\label{def:barrier}
Let $\mathcal{X}_0 \subseteq \mathcal{X}$ be the set of initial states, and let $\mathcal{X}_u \subseteq \mathcal{X}$ be the unsafe states. A continuously differentiable function $B : \mathcal{X} \to \mathbb{R}$ is called a \emph{control barrier function} if the following conditions hold:
\begin{enumerate}
    \item $B(x) \geq 0$ for all $x \in \mathcal{X}_0$;
    \item $B(x) < 0$ for all $x \in \mathcal{X}_u$;
    \item $\nabla B(x)(f(x)+g(x)u)) + \alpha(B(x)) \geq 0$ for all $x \in \{ x \in \mathcal{X} \mid B(x) \geq 0 \}$.
\end{enumerate}
Here, $\alpha : \mathbb{R} \to \mathbb{R}$ is a class $\mathcal{K}$ function that is strictly increasing and $\alpha(0) = 0$~\cite{ames2014control}.
\end{definition}

The set $\mathcal{C}=\{x \in \mathcal{X} \mid B(x) \geq 0\}$ is referred to as the forward invariant set. The expression $\nabla B(x)(f(x) + g(x)u)$ denotes the Lie derivative of the barrier function $B$ along the system dynamics. If a valid barrier function exists for a given control policy $\pi$, then the system is guaranteed to be safe. A formal proof of this result is provided in~\cite{ames2014control}.

In this paper, we also consider {ReLU-based barrier functions}, where the barrier function is implemented as a feed-forward neural network with \( L \) layers. Each layer \( i \) has dimension size \( M_i \), and the activation function used is the Rectified Linear Unit (ReLU), defined as \( \sigma : \mathbb{R}^a \to \mathbb{R}^a \) with
\[
\sigma(x) = 
\begin{cases}
x & \text{if } x \geq 0, \\
0 & \text{otherwise}.
\end{cases}
\]
Each neuron in the network is indexed by its layer and position within the layer, denoted by the pair \( (i, j) \). Given an input \( x \), let \( z_i^j \) denote the pre-activation value at neuron \( (i, j) \). A neuron is said to be \emph{active} if \( z_i^j > 0 \), \emph{unstable} if \( z_i^j = 0 \), and \emph{inactive} otherwise. We denote by \( \mathcal{A} = (A_1, \ldots, A_L) \) an activation pattern, where each \( A_i \) represents the set of active neurons in layer \( i \). Similarly, let \( \mathcal{T} = (T_1, \ldots, T_L) \) represent the unstable neurons.

For a given activation pattern, evaluating the neural network becomes significantly simpler, as the weights and biases reduce to fixed linear terms. We define masked weights and biases with respect to \( \mathcal{A} \). For the first layer:
\[
\overline{W}_{1j}(\mathcal{A}) =
\begin{cases}
W_{1j} & \text{if } j \in A_1, \\
0 & \text{otherwise},
\end{cases}
\quad
\overline{r}_{1j}(\mathcal{A}) =
\begin{cases}
r_{1j} & \text{if } j \in A_1, \\
0 & \text{otherwise}.
\end{cases}
\]
This implies that the output of the \( j \)-th neuron in the first layer is simply \( \overline{W}_{1j}(\mathcal{A})^\top x + \overline{r}_{1j}(\mathcal{A}) \).
For deeper layers \( i > 1 \), we recursively define:
\[
\overline{W}_{ij}(\mathcal{A}) =
\begin{cases}
W_{ij}^\top \overline{W}_{i-1}(\mathcal{A}) & \text{if } j \in A_i, \\
0 & \text{otherwise},
\end{cases}
\quad
\overline{r}_{ij}(\mathcal{A}) =
\begin{cases}
W_{ij}^\top \overline{r}_{i-1}(\mathcal{A}) + r_{ij} & \text{if } j \in A_i, \\
0 & \text{otherwise}.
\end{cases}
\]

We write \( \overline{W}(\mathcal{A}) \) to denote the effective linear transformation of the final layer under the activation pattern \( \mathcal{A} \). This linearized structure captures the piecewise-linear behavior of the ReLU network within the region defined by \( \mathcal{A} \).

\begin{definition}[ReLU-based CBF]\label{def:relu}
Let \( \mathcal{X}(\mathcal{A}) \) denote the region of the input space induced by activation pattern \( \mathcal{A} \). A function \( B \) is a ReLU-based control barrier function if, for every activation pattern \( \mathcal{A} \), the region \( \mathcal{X}(\mathcal{A}) \cap \mathcal{C} \) does not intersect the unsafe set:
\[
\mathcal{X}(\mathcal{A}) \cap \mathcal{C} \cap \mathcal{X}_{\mathrm{unsafe}} = \emptyset.
\]
Moreover, there exists an activation pattern \( \mathcal{A} \) such that for all \( x \) on the boundary where \( B(x) = 0 \), the following hold:
\begin{enumerate}
    \item \( (\overline{W}_{i-1}(\mathcal{A}) W_{ij})^\top (f(x) + g(x)\pi(x)) \geq 0 \) for all \( (i, j) \in \mathcal{T}(x) \cap \mathcal{A} \);
    \item \( (\overline{W}_{i-1}(\mathcal{A}) W_{ij})^\top (f(x) + g(x)\pi(x)) \leq 0 \) for all \( (i, j) \in \mathcal{T}(x) \setminus \mathcal{A} \);
    \item \( \overline{W}(\mathcal{A})^\top (f(x) + g(x)\pi(x)) \geq 0 \).
\end{enumerate}
\end{definition}

This definition encodes the requirement that the barrier function $B$ does not decrease along the system dynamics at the certificate boundary where $B(x) = 0$~\cite{zhang2023exact}. This definition guarantees that $B$ serves as a valid control barrier function over the region defined by the activation pattern $S$, and that the system remains within the forward invariant set $\mathcal{C}$ if it starts there.

\section{Monitoring Framework}
Classically the validity of a certificate function, e.g., the barrier certificate in Definition~\ref{def:barrier}, is verified on the entire state space. This is expensive and wasteful, especially in high-dimensions. As the dimension of the state space increases, both, finding and validating a certificate becomes exponentially more expensive. At the same time the volume of all states visited by the process on an infinite run becomes negligible compared to the volume of the entire domain, implying that most of the work done during verification is unnecessary.

\paragraph{Certificates at Runtime.}
We use $\cond$ to denote an arbitrary certificate condition, e.g., the conditions in Definition~\ref{def:barrier}. The binary value of a certificate condition is determined by the closed-loop system $\sys$, the certificate $\cert$, and a subset of the state space $\aY\subseteq \aS$, i.e., $\cond(\sys, \cert, \aY) \in \{0,1\}$. In static verification we check the certificate condition over the entire state space $\aS$. This ensures that $\varphi$ is satisfied on the infinite trace $\cW\coloneqq (\cS_t)_{t\in \pRN}$ generated by $\sys$. Intuitively, we would expect that validating the certificate condition for a single trace, i.e., $\cond(\sys, \cert, \cW)$, should be significantly simpler than validating it for the entire state space $\cond(\sys, \cert, \aS)$. The reason why static verification focuses on the entire state space is because it may be impossible to obtain the trace $\cW$ at development time due to uncertainty. 
However, for short time horizons this uncertainty remains manageable. This is where static verification done dynamically shines. 

\begin{example}
    Consider $\aS\coloneqq [0,1]^d$ for $d\in \pNN$ and let $\cW\coloneqq (\cS_t)_{t\in \pRN}$. If we know that the certificate $\cert$ and the system $\sys$ are Lipschitz with a constant $L$, a naive approach to validating the certificate condition on $\aS$ up to a precision $c>0$ is to construct a $c/L$-grid \cite{DBLP:conf/tacas/ChatterjeeHLZ23}. The number of vertices in the grid increases exponentially in the dimension, i.e., for $\aS$ we require approximately $(L/c)^d$ vertices. By contrast, a $c/L$-grid over $\cW$ requires only $L/c$ vertices. Hence, verifying the trace is exponentially faster than verifying the entire domain. 
\end{example}

\paragraph{Certificate Monitor.}
Our objective is to construct a certificate monitor which detects certificate condition violations before they occur. 
Let $\cW\coloneqq (\cS_t)_{t\in \pRN}$ be the trace generated by $\sys$. 
The trace up to time $t\in \pRN$ is defined as $\cW_{[0,t)}\coloneqq (\cS_t)_{t\in [0,t)}$. The monitor can only observe a finite subset of this trace as given by its observation frequency parameter $\conti>0$, which we consider as given, e.g., the parameter may be a function of the monitor's hardware constraints. For every interval $[a,b)\subseteq \RN$ we denote $[a,b)_{\conti}\coloneqq \{a+k\cdot \conti \mid k\in \NN \land a+k\cdot \conti < b\}$ as the finite discretization of $[a,b)$ w.r.t. $\conti$. We denote the observable trace by the monitor as $\cW_{[0,t)}^{\conti}\coloneqq (\cS_t)_{t\in [0,t)_{\conti}}$.
For a given safety horizon $\hori\in \pRN$, e.g., the time required to stop a vehicle, we want a monitor $\moni\colon \aS^*\to \{0,1\}$ that maps the trace $\cW_{[0,t)}^{\conti}$ into a verdict in $\{0,1\}$. We call the monitor $\moni$ sound, iff a positive verdict guarantees that the certificate is satisfied on the trace $\cW_{[0,t+ \hori)}$, i.e., 
\begin{align}
    \label{eq:sound}
  \moni\left(\cW_{[0,t)}^{\conti}\right) = 1 \implies \cond\left(\sys, \cert, \cW_{[0,t+ \hori)}\right) = 1.
\end{align}
\begin{problem}
    For a given closed-loop system $\sys$ as defined in Equation~\ref{eq:system}, a certificate function $\cert$, a certificate condition $\cond$, and a safety horizon $\hori$ with epsilon $\epsilon$, construct a sound certificate monitor $\moni$.
\end{problem}

\subsection{Monitor Construction}
We show how to construct a sound certificate monitor using a local abstraction function and a certificate verifier as subroutines. 
The local abstraction function provides a sound overapproximation of how the process behaves within a given time horizon. 
This overapproximation is a bounded region, e.g., a cone, within the state space. The verifier checks whether the certificate condition is satisfied for every state within a given region of the state space. 
Our monitor simply invokes the verifier on the uncertainty region computed by the local abstraction function. 
We chose this modular construction because there exists a plethora of predictive monitoring tools which can be utilized as a local abstraction function~\cite{DBLP:journals/corr/abs-2412-16564}. Similarly, there exists a variety of certificate verification tools 
that can assess the validity of a certificate for a given region~\cite{DBLP:conf/tacas/ChatterjeeHLZ23,zhao2020synthesizing,zhao2022verifying}. Each tool has its own assumptions on the system, as a consequence our certificate monitor can adapt and improve depending on the available subroutines.

\paragraph{Abstraction and verification.}
We assume access to a local abstraction function $\orac\colon \aS^*\times \pRN \to \power(\aS)$ and a verifier $\veri\colon\power(\aS) \to \{0,1\}$ where $\power(\aS)\coloneqq \{\aY\mid \aY\subseteq \aS\}$ is the set of all subsets of $\aS$. 
First, the abstraction function maps the trace $\cW_{[0,t)}^{\conti}\in \aS^*$ up to time $t\in \pRN$ and a time horizon $\hori\in \pRN$ into 
a subset of the domain, i.e.,  $\orac(\cW_{[0,t)}^{\conti}, \hori) \subseteq \aS$. We are guaranteed that the process is contained within the output region for the specified time horizon, i.e.,
\begin{align}
    \label{eq:orac}
    \forall t\in [t, t+\hori) . \; \cS_t \in \orac(\cW_{[0,t)}^{\conti}, \hori).
\end{align}
Second, the verifier checks whether the certificate $\cert$ satisfies condition $\cond$ w.r.t.\ the process $\sys$ for a given region $\aY \subseteq\aS$, i.e., 
\begin{align}
    \label{eq:veri}
    \forall \aY \subseteq \aS .\; \veri(\aY) = 1 \iff \cond(\sys, \cert, \aY) = 1
\end{align}
In Section~\ref{sec:verification} we provide a concrete examples for both the abstraction function and the verifier. For now we treat both of them as black-boxes.

\paragraph{Local Verification.}
By combining the abstraction function and the verifier we construct a simple subroutine ensuring that the certificate condition is satisfied locally, i.e., the certificate remains valid within the immediate time horizon. 
Assume we are at time $t\in \pRN$, have observed the trace $\cW_{[0,t)}^{\conti}$ derived from the actual trace $\cW_{[0,t)}$, and decided for a time horizon $\hori_t\in \pRN$. 
We invoke the abstraction function $\orac$ with the observed trace $\cW_{[0,t)}^{\conti}$ and time horizon $\hori_t$, and verify the generated region using $\veri$
to guarantee the validity of the certificate up until time $t+\hori_t$, i.e., we compute 
\begin{align}
\veri( \orac(\cW_{[0,t)}^{\conti}, \hori_t)).
\end{align}
If the verifier outputs $1$ the validity for certificate is assured up until $t+\hori$, if the verifier outputs $0$ the certificate condition is violated for a state in the predicted region. Hence, it may violated on $\cW_{[t,t+h)} \subseteq \orac(\cW_{[0,t)}^{\conti}, \hori_t)$. This gives a local overapproximation of the certificate condition value. 
\begin{lemma}
    \label{lemma:local}
    If the abstraction function satisfies Condition~\ref{eq:orac} and the verifier satisfies Condition~\ref{eq:veri}, we are guaranteed that 
    \begin{align*}
        \veri( \orac(\cW_{[0,t)}^{\conti}, \hori_t))=1 \implies \cond(\sys, \cert, \cW_{[t,t+\hori_t)})=1 .
    \end{align*}
\end{lemma}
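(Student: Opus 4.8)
The plan is to chain the two assumed specifications through the set inclusion supplied by the abstraction function. First I would fix the time $t \in \pRN$ and horizon $\hori_t \in \pRN$ and abbreviate $\aY \coloneqq \orac(\cW_{[0,t)}^{\conti}, \hori_t) \subseteq \aS$ for the predicted region. Assuming the premise $\veri(\aY) = 1$, the verifier specification (Condition~\ref{eq:veri}) immediately gives $\cond(\sys, \cert, \aY) = 1$; that is, the certificate condition holds on the entire overapproximating region.

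Next I would invoke the abstraction guarantee. Condition~\ref{eq:orac} states that $\cS_s \in \aY$ for every $s \in [t, t+\hori_t)$, so the set of states traversed by the trace segment $\cW_{[t,t+\hori_t)}$ is contained in $\aY$. The remaining step is to push validity of the certificate condition from $\aY$ down to this subset. Here I would use the fact that every certificate condition $\cond$ under consideration — for instance each of the three conditions in Definition~\ref{def:barrier} — has the shape ``for all states $x$ in the argument region, [a pointwise predicate depending only on $\sys$, $\cert$, and $x$] holds'', and is therefore antitone in its region argument: if it holds on $\aY$ it holds on every $\aY' \subseteq \aY$. Instantiating this with $\aY'$ the state set of $\cW_{[t,t+\hori_t)}$ yields $\cond(\sys, \cert, \cW_{[t,t+\hori_t)}) = 1$, which is exactly the conclusion.

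The only real subtlety, and the step I would state most carefully, is the mild overloading of $\cW_{[t,t+\hori_t)}$: it denotes a time-indexed family of states, but as an argument of $\cond$ (and when comparing it against $\aY$) it must be read as the set of states it visits; once this identification is made explicit the proof is simply ``superset satisfies $\Rightarrow$ subset satisfies'' composed with the two black-box guarantees. I would also remark that no property of the controller $\pol$ is used beyond what is already encapsulated in $\sys$, and that the claim is deliberately horizon-local — nothing is asserted about $\cW$ past time $t+\hori_t$ — which is precisely why a single verifier call over the bounded region $\aY$ suffices.
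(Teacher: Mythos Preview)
Your proof is correct and follows the same two-step chaining as the paper's proof: use Condition~\ref{eq:orac} to get $\cW_{[t,t+\hori_t)} \subseteq \orac(\cW_{[0,t)}^{\conti},\hori_t)$, and Condition~\ref{eq:veri} to transfer $\veri(\cdot)=1$ to $\cond(\cdot)=1$. If anything you are more careful than the paper, since you make explicit the antitonicity of $\cond$ in its region argument (and the reading of $\cW_{[t,t+\hori_t)}$ as its visited-state set), which the paper's proof uses implicitly without comment.
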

\begin{proof}
    Condition~\ref{eq:orac} ensures that the over-approximation computed by the abstraction function contains the trace up to time $t+\hori_t$, i.e., $\cW_{[t,t+\hori_t)}) \subseteq \mathcal{Z}\coloneqq \orac(\cW_{[0,t)}^{\conti}, \hori_t) $.
    Condition~\ref{eq:veri} ensures that $\veri(\aY) = 1  \implies \cond(\sys, \cert, \cW_{[t,t+\hori_t)}) = 1$.
\end{proof}

\paragraph{Soundness.}
The subroutine described above guarantees that the certificate condition remains satisfied within the specified time horizon. However, verifying on the fly requires time and cannot be done continuously. 
Therefore, constructing a sound monitor requires us to get the timing right. 
Assume we can start the verification procedure after having observed the current state. If we are able to compute the local guarantee, i.e., $  \veri( \orac(\cW_{[0,t)}^{\conti}, \hori_t))$, for a time horizon of $\hori_t\coloneqq 2\cdot \conti + \hori$ in less than $\conti$-time, then we can obtain a sound monitor defined in Algorithm~\ref{alg:monitor-abstract}.
This guarantees that the monitor raises a warning \emph{before} the system enters a region where the certificate is invalid, allowing the system to activate the fail-safe mechanism in time.

\begin{theorem}
    Given a state space $\aS$, a system $\sys$, and a certificate $\cert$.
    Let $\orac\colon \aS^*\times \pRN \to \power(\aS)$ be an abstraction function (satisfying Eq.~\ref{eq:orac}, $\veri\colon\power(\aS) \to \{0,1\}$ be a verifier (satisfying Eq.~\ref{eq:veri}), $\hori\in \pRN$ be a time horizon, $\conti\in \pRN$ be a observation frequency parameter. 
    If the the sub-routine \emph{\textsc{Next}} can be computed in less than $\conti$-time, the monitor defined by Algorithm~\ref{alg:monitor-abstract} is sound, i.e., 
    \begin{align*}
        \moni_{\hori}(\cW_{[0,t)}^{\conti}) = 1 \implies \cond(\sys, \cert, \cW_{[0,t+ \hori)}) = 1.
    \end{align*}
\end{theorem}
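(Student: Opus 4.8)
The plan is to derive the global soundness property (Eq.~\ref{eq:sound}) by chaining the local guarantee of Lemma~\ref{lemma:local} along the discrete observation times, while keeping careful track of when each local check is actually available to the monitor. Reading Algorithm~\ref{alg:monitor-abstract}, at every observation time $s\in[0,t)_\conti$ the monitor launches \textsc{Next}, which computes $\veri(\orac(\cW_{[0,s)}^\conti,\hori_s))$ with lookahead $\hori_s\coloneqq 2\conti+\hori$, and the verdict $\moni_\hori(\cW_{[0,t)}^\conti)=1$ means that every such check that has completed by time $t$ returned $1$. So two things must be shown: (i) each successful check certifies $\cond$ on a time window of length $2\conti+\hori$ anchored at its observation time, and (ii) these windows overlap enough that, once we have accumulated all the checks completed by time $t$, their union covers $[0,t+\hori)$.

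For (i) there is nothing to do beyond invoking Lemma~\ref{lemma:local} (which in turn rests on Eq.~\ref{eq:orac} and Eq.~\ref{eq:veri}): the check at observation time $s$ returning $1$ yields $\cond(\sys,\cert,\cW_{[s,s+2\conti+\hori)})=1$. For (ii) I would let $s_0<s_1<\dots<s_N$ be the observation times whose checks have completed by time $t$ (all returning $1$ since $\moni_\hori(\cW_{[0,t)}^\conti)=1$), with $s_0=0$, and prove by induction on $k$ that $\cond(\sys,\cert,\cW_{[0,\,s_k+2\conti+\hori)})=1$. The inductive step only needs $s_{k+1}\le s_k+2\conti+\hori$ — which holds since $s_{k+1}=s_k+\conti$ — so that the windows $\cW_{[0,\,s_k+2\conti+\hori)}$ and $\cW_{[s_{k+1},\,s_{k+1}+2\conti+\hori)}$ overlap and their union is $\cW_{[0,\,s_{k+1}+2\conti+\hori)}$; $\cond$ holding on a superinterval gives $\cond$ on the union since $\cond$ is monotone under restriction to sub-traces. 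This yields $\cond(\sys,\cert,\cW_{[0,\,s_N+2\conti+\hori)})=1$.

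It then remains to argue $s_N+2\conti+\hori\ge t+\hori$, i.e.\ $s_N\ge t-2\conti$. Here the two standing assumptions enter: observations occur exactly every $\conti$, and \textsc{Next} terminates strictly before $\conti$ time elapses. Consequently the check launched at observation time $s$ has completed strictly before the next observation $s+\conti$, hence is incorporated into the verdict whenever $s+\conti\le t$; taking $s$ to be the largest multiple of $\conti$ with $s+\conti\le t$ gives $s+\conti>t-\conti$, i.e.\ $s>t-2\conti$, so $s_N\ge s>t-2\conti$ as required. (For $t\le 2\conti$ the claim is immediate: the window of the very first check, $\cW_{[0,2\conti+\hori)}$, already contains $\cW_{[0,t+\hori)}$.) Combining with the induction gives $\cond(\sys,\cert,\cW_{[0,t+\hori)})=1$, which is the theorem.

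The routine parts are the interval-union induction and the appeal to Lemma~\ref{lemma:local}. The part I expect to require the most care is the timing bookkeeping in the last paragraph: pinning down precisely which local checks are guaranteed to have finished and been folded into $\moni_\hori(\cW_{[0,t)}^\conti)$ at query time $t$, and confirming that the slack $2\conti$ in $\hori_t=2\conti+\hori$ is exactly right — one $\conti$ absorbing the discretization gap between $t$ and the last usable observation time, one $\conti$ absorbing the computation budget of \textsc{Next}. Getting the strict-versus-nonstrict inequalities right there is what separates a sound monitor from an off-by-one-unsound one, so that is where the argument must be spelled out in full.
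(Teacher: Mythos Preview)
Your proposal is correct and follows essentially the same approach as the paper: induction over the discrete observation points, invoking Lemma~\ref{lemma:local} at each step for the local guarantee on a window of length $2\conti+\hori$, and using the $<\conti$ computation bound on \textsc{Next} to ensure the verdict at step $k{+}1$ is available before step $k{+}2$. If anything, you are more explicit than the paper about the timing bookkeeping and the exact role of the $2\conti$ slack (one $\conti$ for the discretization gap, one for the computation budget), which the paper leaves somewhat implicit.
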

\begin{proof}
    The sequence of observation points is $(k\cdot \conti)_{k\in \NN}$ we prove this claim by induction over the observation points. 
    Let $k=0$. We assume we have already verified the first $\conti+ \hori$ time interval, if the verdict $v=0$ we have found a potential violation and we are done. Otherwise, we know from Lemma~\ref{lemma:local} that this implies that $\cond(\sys, \cert, \cW_{[0,\conti+ \hori)}))=1$.
    This implies that every point $s\in [0, \conti)$ is covered.
    The induction hypothesis is that at the beginning of every observation point  $k\in\pNN$, $v_{k-1}=1$ implies that $\cond(\sys, \cert, \cW_{[0,\conti\cdot k + \hori)}))=1$.
    Assume we are at observation point $k+1$.
    If $v_k=0$ we are done. If $v_k=1$, then by the induction hypothesis we know that $\cond(\sys, \cert, \cW_{[0,\conti\cdot (k+1) + \hori)}))=1$.
    We start computing \textsc{Next} for the time horizon $2\conti+\hori$, i.e. upon termination we know whether there exists a potential certificate violation on $[(k+1)\cdot \conti, (k+3)\cdot \conti + \hori)$ or not. 
    By assumption we know that executing \textsc{Next} requires less than $\conti$ time. Hence, before time step $k+2$ we have finished computing $v_{k+1}$, which if $v_{k+1=1}$ guarantees that $\cond(\sys, \cert, \cW_{[0,\conti\cdot (k+2) + \hori)}))=1$.
\end{proof}

\paragraph{Implementation.}
Algorithm~\ref{alg:monitor-abstract} describes the schematic dynamic verification routine. We initialize the monitor with \textsc{Init}, which creates the initial empty trace $\cW_0$ and sets the initial verdict $v_0$ to $1$. During runtime the monitor executes \textsc{Next} whenever a new input state $\cS\in \aS$ is observed. At execution $k\in \pNN$, it appends the observed state $\cS$ to the past trace creating $\cW_k$, which is used to perform the verification step over the uncertainty region provided by the abstraction function; checking whether there is a certificate violation in the future. 
The verdict of this step is stored in $v_{\mathrm{buffer}}$. 
If the previous verdict $v_{k-1}$ is $1$, i.e., no certificate violation was detected in some previous iteration, the verdict at iteration $v_k$ is set to $v_{\mathrm{buffer}}$.
Otherwise, the verdict remains $0$, because once a violation is detected, the monitor will consider the certificate condition violated.

\begin{algorithm}
\caption{Schematic Monitor $\moni_h$}
\label{alg:monitor-abstract}
\begin{algorithmic}[1]
\State \textbf{Given:} time horizon $\hori\in \pRN$, observation frequency parameter $\conti$, abstraction function $\orac\colon \aS^*\times \pRN \to \power(\aS)$, a verifier $\veri\colon\power(\aS) \to \{0,1\}$. 
\Function{Init}{\,}
    \State $\cW_0 \gets \cS; \; v_0 \gets 1$
\EndFunction
\Function{Next}{$\cS$}
    \State $\cW_k \gets \cW_{k-1} \cdot \cS $
    \State $v_{\mathrm{buffer}} \gets \veri( \orac(\cW_k, 2\cdot \conti+\hori))$
    \State $v_k \gets  v_{\mathrm{buffer}}$ \textbf{if} $v_{k-1} = 1$ \textbf{else} $v_k \gets 0$
    \State\Return $v_k$
\EndFunction
\end{algorithmic}
\end{algorithm}

\begin{remark}
    If the execution of \textsc{Next} requires more time, parallelization can be exploited to avoid this problem, i.e., we simply execute \textsc{Next} while accounting for the additional computation time on a new thread at the end of every control interval. If the computation time remains constant, we can obtain a similar guarantee. Moreover, the number of threads can be bounded based on that time.
\end{remark}

\medskip
\noindent

\section{Online Verification of ReLU-based Control Barrier Functions (CBFs)}\label{sec:verification}

Our monitoring approach is designed to ensure runtime safety and verify the correctness of a neural certificate, without requiring full state space verification or access to future control inputs. In this section, we instantiate the general monitoring framework for ReLU-based control barrier functions.

\begin{figure}
    \centering
    \includegraphics[width=\linewidth]{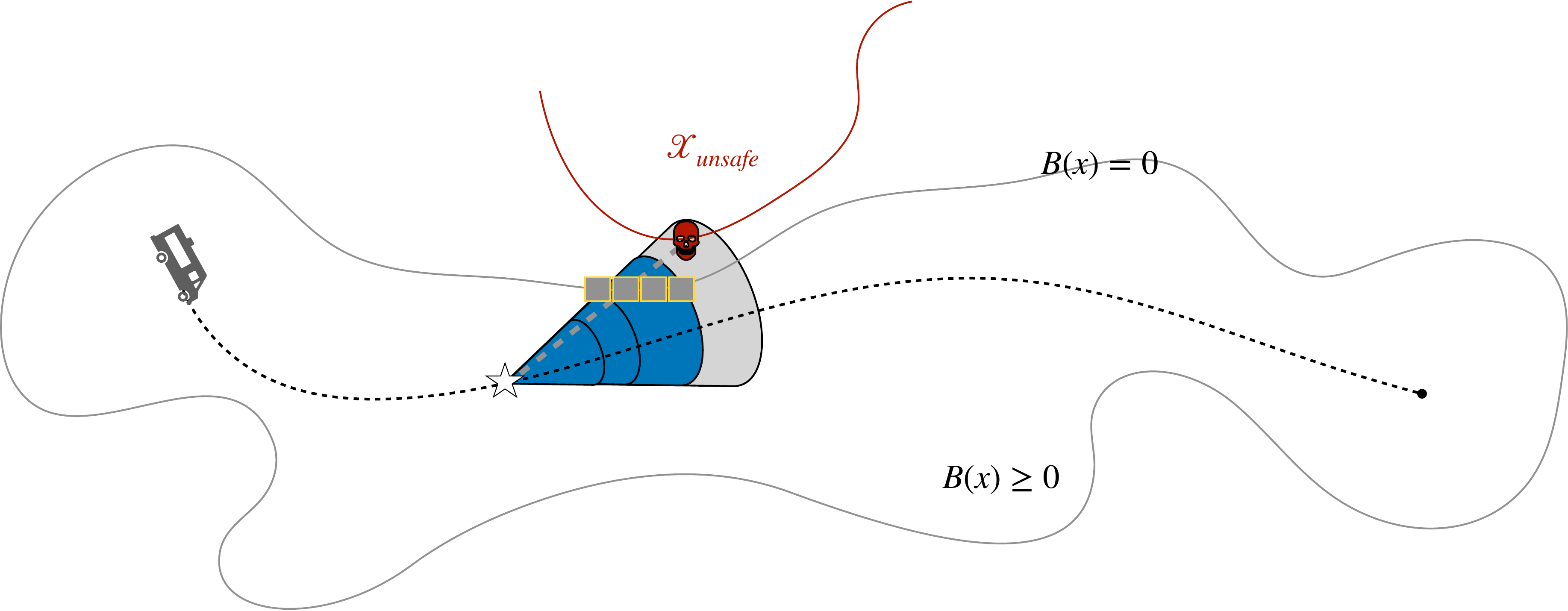}\vspace{-0.5em}
    \caption{Illustration of online verification with lookahead. At each time step, the monitor maintains a cone that over-approximates the reachable states within a fixed horizon. If this cone intersects the unsafe region, the monitor searches for a cube on the barrier boundary where $B(x) = 0$. The cone is then shrunk to contain this boundary, and the identified cube along with its neighbors are verified to assess certificate validity.\label{fig:sketch}}
\end{figure}

As illustrated in Figure~\ref{fig:sketch}, the monitor continuously observes the system's trajectory and incrementally builds a conservative over-approximation of reachable states within a fixed lookahead horizon. If no violations are detected within this cone region, the system is deemed safe for the near future.
If a \emph{safety violation} is detected, e.g., the cone intersects the unsafe region, the monitor performs a refinement step. It searches for a state on the certificate boundary ($B(x)=0$) and identifies the corresponding \emph{activation region} (called a \emph{cube}) in which this state lies. Because ReLU networks induce a partition of the state space into finitely many such cubes, and each cube corresponds to a fixed neuron activation pattern, the CBF condition from Def.~\ref{def:barrier} becomes a set of linear constraints within each cube.
The monitor then verifies these conditions locally by checking whether the barrier certificate holds in the identified cube and its neighbors. If a violation is detected, the system triggers a fail-safe response (e.g., controller override). If all cubes are verified to be safe, the monitor continues observing and updating the cone over time.

Our approach operates over two state spaces: (1) the continuous, real-valued state space defined by the underlying dynamical system, and (2) the discrete space of {activation patterns}, induced by the architecture of the neural CBF. 
In the following, we refer to an element of (1) simply as a \emph{state}, and to the subset of states corresponding to a particular activation pattern as a \emph{cube}. Two states belong to the same cube if they produce identical neuron activation patterns in the CBF. We present the main monitoring loop in Algorithm~\ref{alg:monitor-main}.

\medskip
\noindent\emph{Initialization.} Our approach begins with a pre-trained neural CBF and an associated control policy. The control policy is treated as a black-box, meaning its internal structure is not accessible. However, we assume access to a set of known constraints that bound the range of control inputs generated by the policy, e.g., minimum and maximum allowable acceleration values. In addition, we define a fixed {horizon} \emph{h} parameter, which determines how far into the future the monitor evaluates the system’s evolution to assess safety. 
\begin{algorithm}
\caption{Main Monitor Loop for ReLU-based CBFs}
\label{alg:monitor-main}
\begin{algorithmic}[1]
\State \textbf{Given:}  forward invariant set $\mathcal{C}$, unsafe set $\mathcal{X}_u$, lookahead horizon $h$
\Function{Init}{\,}
\State $v_0\gets 1$; $\mathcal{X}_\text{safe} \gets \aS\setminus \mathcal{X}_u$
\EndFunction
\Function{Next}{$\cS$}
    \State $(cone, x_\text{unsafe}, i) \gets \textsc{ConstructCone}(\mathcal{X}_u, x, h)$
    \textcolor{gray}{// Abstraction function call}
    \If{$x_\text{unsafe}$ is found}  
    \State $v_{\mathrm{buffer}} \gets $ \textsc{VerifyCubesOnBoundary}$(x, x_\text{unsafe}, cone, i, h)$
    \textcolor{gray}{// Verifier call}
    \State $v_k \gets  v_{\mathrm{buffer}}$ \textbf{if} $v_{k-1} = 1$ \textbf{else} $v_k \gets 0$ 
    \State\Return $v_k$
    
    \EndIf
   
\EndFunction
\end{algorithmic}
\end{algorithm}
\begin{algorithm}
\caption{ConstructCone (The Abstraction Function)}
\label{alg:expand}
\begin{algorithmic}[1]
\Function{ConstructCone}{$\mathcal{X}_u$, $x$, $h$}
    \State $cone \gets \{x\}$, $i \gets 0$ \textcolor{gray}{// Initialize cone}
    \While{$i < h$}
        \State $slice \gets \textsc{expand}(cone) \setminus cone$
        \State $cone \gets cone \cup slice$
        \If{$slice \cap \mathcal{X}_u \neq \emptyset$}
            \State \Return $(cone, \textsc{pick}(slice \cap \mathcal{X}_u), i)$
        \EndIf
        \State $i \gets i + 1$
    \EndWhile
    \State \Return $(cone, \bot, h)$
\EndFunction
\end{algorithmic}
\end{algorithm}

\begin{algorithm}
\caption{VerifyCubesOnBoundary (The Verifier)}
\label{alg:verify}
\begin{algorithmic}[1]
\Function{VerifyCubesOnBoundary}{$x$, $x_\text{unsafe}$, $cone$, $i$, $h$}
     \State $cube \gets \textsc{BinarySearch}(x, x_\text{unsafe})$
    \State $boundary \gets \{cube\}$
    \While{$boundary \neq \emptyset$}
        \State $queue \gets \emptyset$
        \ForAll{$c \in boundary$}
            \If{$c \cap \mathcal{X} \cap cone = \emptyset$} \textbf{continue}
            \ElsIf{$\lnot \textsc{VerifyLinear}(c)$} \textcolor{gray}{// Check conditions in Def.~\ref{def:relu}}
                \State \textsc{fail-safe}()
            \Else
                \State $queue.\textsc{push}(c)$
            \EndIf
        \EndFor
        \State $boundary \gets \emptyset$
        \While{$queue \neq \emptyset$}
            \ForAll{$c' \in \textsc{neighborhood}(queue.pop())$}
                \If{$\textsc{Verified}(c')$ or $c' \cap \mathcal{X} = \emptyset$}
                    \textbf{continue}
                \ElsIf{$c' \cap \mathcal{X} \cap cone = \emptyset$}
                    \State $boundary.\textsc{push}(c')$
                \ElsIf{$\lnot \textsc{VerifyLinear}(c')$} 
                    \State \textsc{fail-safe}()
                \Else
                    \State $queue.\textsc{push}(c')$
                \EndIf
            \EndFor
        \EndWhile
        \State $cone \gets cone \cup \textsc{expand}(cone)$
        \State $i \gets i + 1$
        \If{$i > h$} \textbf{break} \EndIf
    \EndWhile
\EndFunction
\end{algorithmic}
\end{algorithm}

During monitoring, the cone is constructed using an abstraction function. To detect potential safety violations, the monitor expands the cone using the \textsc{ConstructCone} procedure (Algorithm~\ref{alg:expand}). If no intersection with the unsafe set $\mathcal{X}_u$ is found within the horizon, the system continues unimpeded. However, if an unsafe state is detected, the monitor performs a binary search between the current state and the unsafe state to locate a cube on the barrier boundary (line 2, Algorithm~\ref{alg:verify}). 
The online verification procedure begins by adding the identified cube to a queue, referred to as the \emph{boundary}. 
Each cube in this queue is then iteratively verified. Cubes that do not intersect with the current lookahead cone are discarded, as they cannot influence the set of reachable states. If verification fails for any cube, this indicates a violation of the certificate conditions, prompting the system to enter a fail-safe mode or switch to a backup controller. Successfully verified cubes are marked accordingly and added to a queue for neighborhood expansion.

With this initial queue, we start a breadth-first search by examining the 1-bit Hamming neighbors of each verified cube, i.e., those whose ReLU activation patterns differ by exactly one neuron. A neighboring cube is discarded if it has already been verified, or does not intersect any state in $\mathcal{X}$. If it lies outside the current cone, we may verify it after expanding the cone at the next iteration.
All intersection checks as well as \textsc{VerifyLinear} in Algorithm~\ref{alg:verify} operate on a single cube, thus they can be solved efficiently as a simple set of linear inequalities.

\section{Experiments}
In this section, we evaluate the effectiveness of our monitoring method via a case study and compare it with static verification. 

The benchmark we used is a satellite rendezvous example, adapted from~\cite{dawson2022safe,jewison2016spacecraft}. In this scenario, a chaser satellite attempts to approach a target satellite while remaining within a designated safe region, corresponding to its line of sight. Both satellites are assumed to be in orbit around the Earth.
The system state is represented by the 6-dimensional vector $[x, y, z, v_x, v_y, v_z]$, where $[x, y, z]$ denotes the relative position of the chaser with respect to the target, and $[v_x, v_y, v_z]$ its relative velocity. The control input is given by $u = [u_x, u_y, u_z]$, representing thrust applied along each axis. The control interval is fixed to be $0.1$ second.

We evaluate both static formal verification and online monitoring across different neural CBF architectures. Static verification is performed using SEEV~\cite{DBLP:conf/nips/ZhangQG024}, a recent tool tailored for verifying ReLU-based certificates. For online monitoring, we employ a local abstraction function that is fixed-step unrolling with known input bounds and a verifier operating on the ReLU activation patterns as described in Section~5.
\begin{table}[h]
\centering
\caption{Verification results with corresponding full verification times using SEEV~\cite{DBLP:conf/nips/ZhangQG024}.}
\begin{tabular}{c|c|c|r}
\toprule
No. Hidden Layers & No. Neurons per Layer & Verification Result & Full Time (s) \\
\midrule

2  & 8  & safe & 50.21 \\
4  & 8  & safe & 253.20 \\
8  & 16 & -- &  \textcolor{darkred}{>2 hours}\\
16 & 16 & unsafe & 6.03 \\
\bottomrule
\end{tabular}

\label{tab:verification}
\end{table}

\begin{figure}
    \centering
    \includegraphics[width=\linewidth]{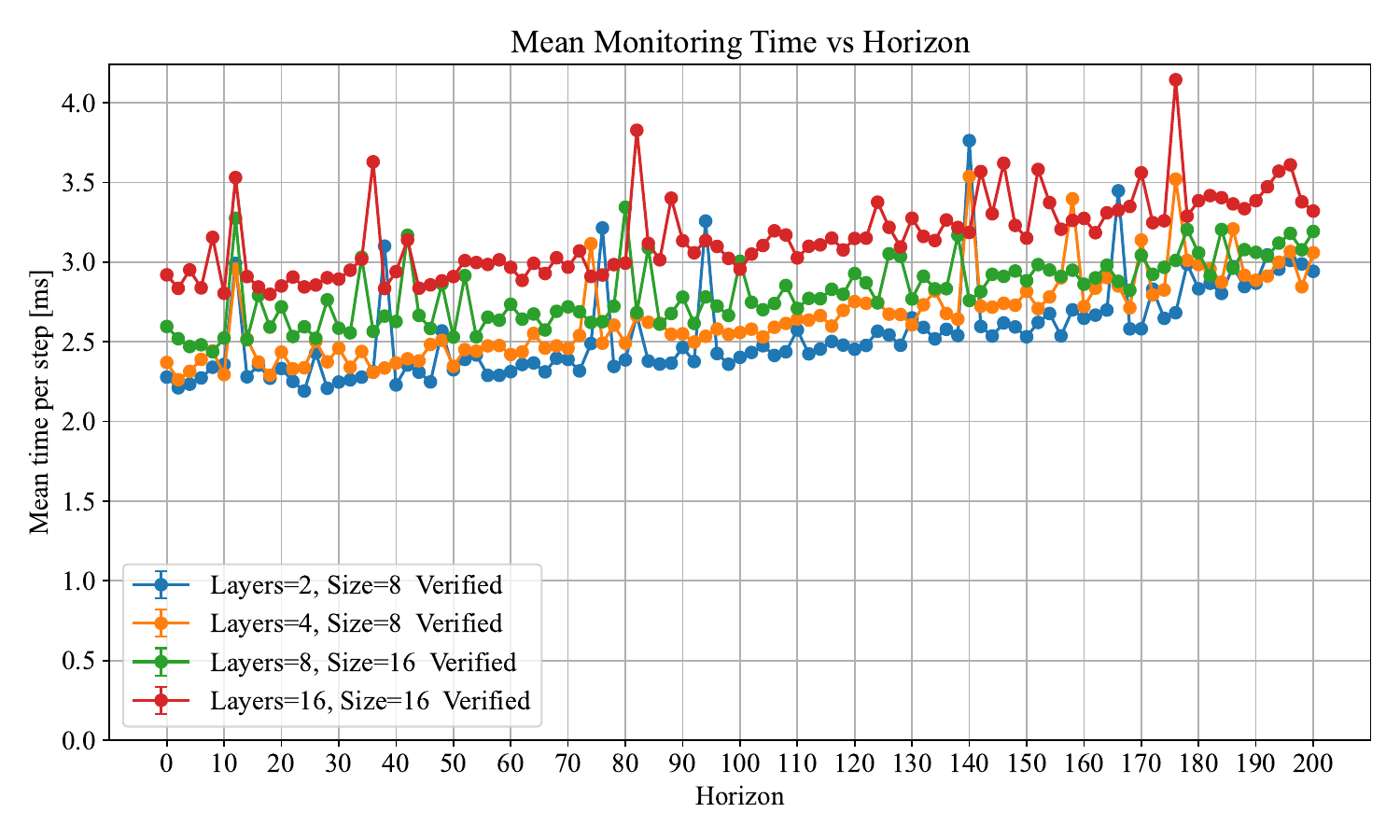}
    \includegraphics[width=\linewidth]{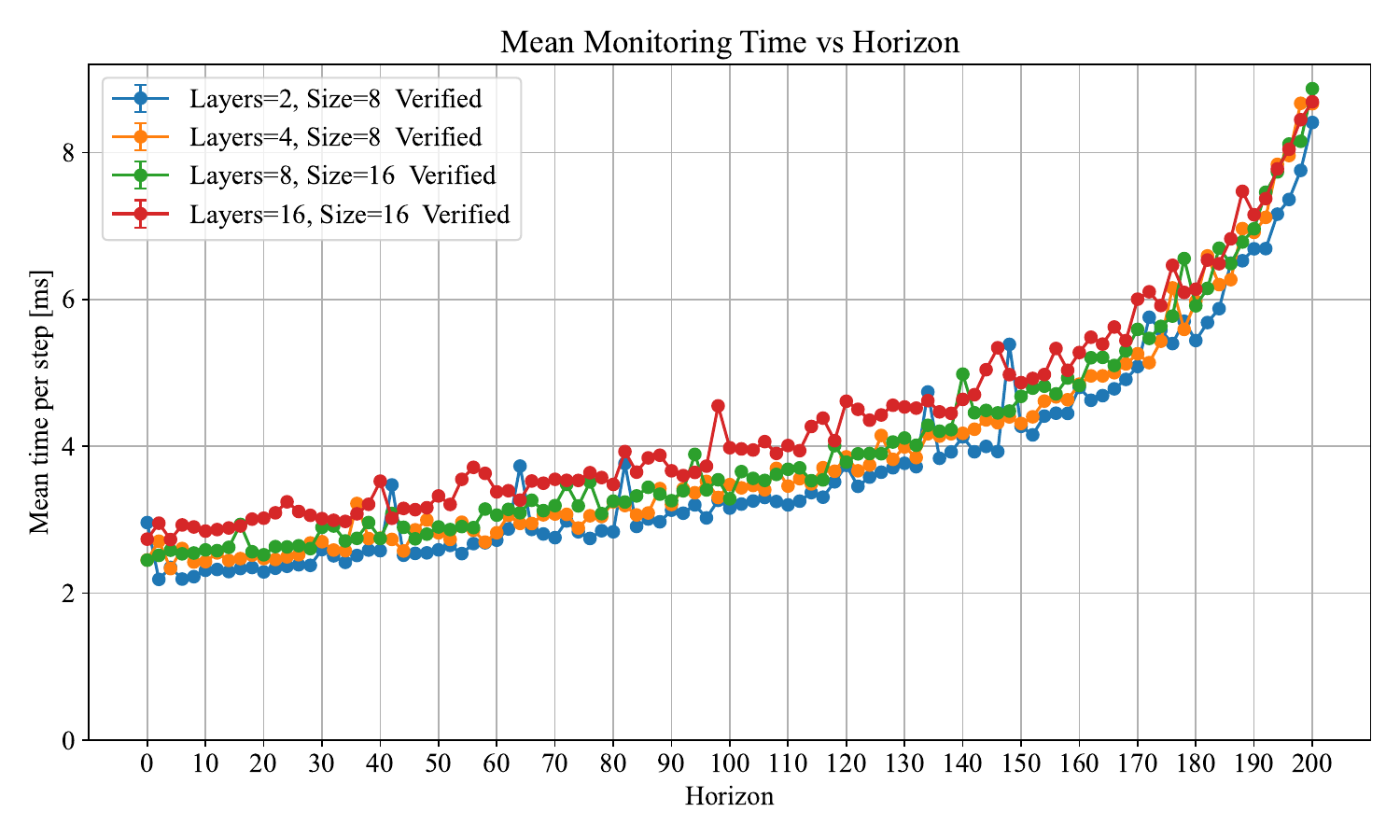}
    \caption{Monitoring overhead measured for lookahead horizons up to 200 steps, across four network configurations with varying depth and width. All configurations were successfully verified through static analysis (see Table~\ref{tab:verification}). The two plots correspond to different initial states. The monitor did not raise warnings during these runs.}
    \label{fig1}
\end{figure}

Table~\ref{tab:verification} presents the results of static formal verification for neural CBFs of varying sizes. Notably, the verification of CBF with 8 layers and 16 neurons per layer timed out after two hours using the SEEV method. In contrast, our online monitoring approach was able to detect violations much faster (see Figure~\ref{fig2}).

We now evaluate how our monitoring framework performs. As shown in Figure~\ref{fig2}, the monitor was able to detect violations with minimal overhead with a longer lookahead horizon (> 70 steps). With shorter horizons, no violations were detected, as the future cones remained relatively small. This also give an insight: even incorrect certificates may suffice to ensure safety over limited parts of the system's state space. As the lookahead horizon increases, the monitor explores a broader region and is more likely to encounter violations where the certificate conditions fail. We also observe an increase in monitoring overhead around 70 steps. This is because, as the horizon increases, the over-approximation increases in size, thus intersects more barrier cubes which need to be verified. 
\begin{figure}
    \centering
    \includegraphics[width=\linewidth]{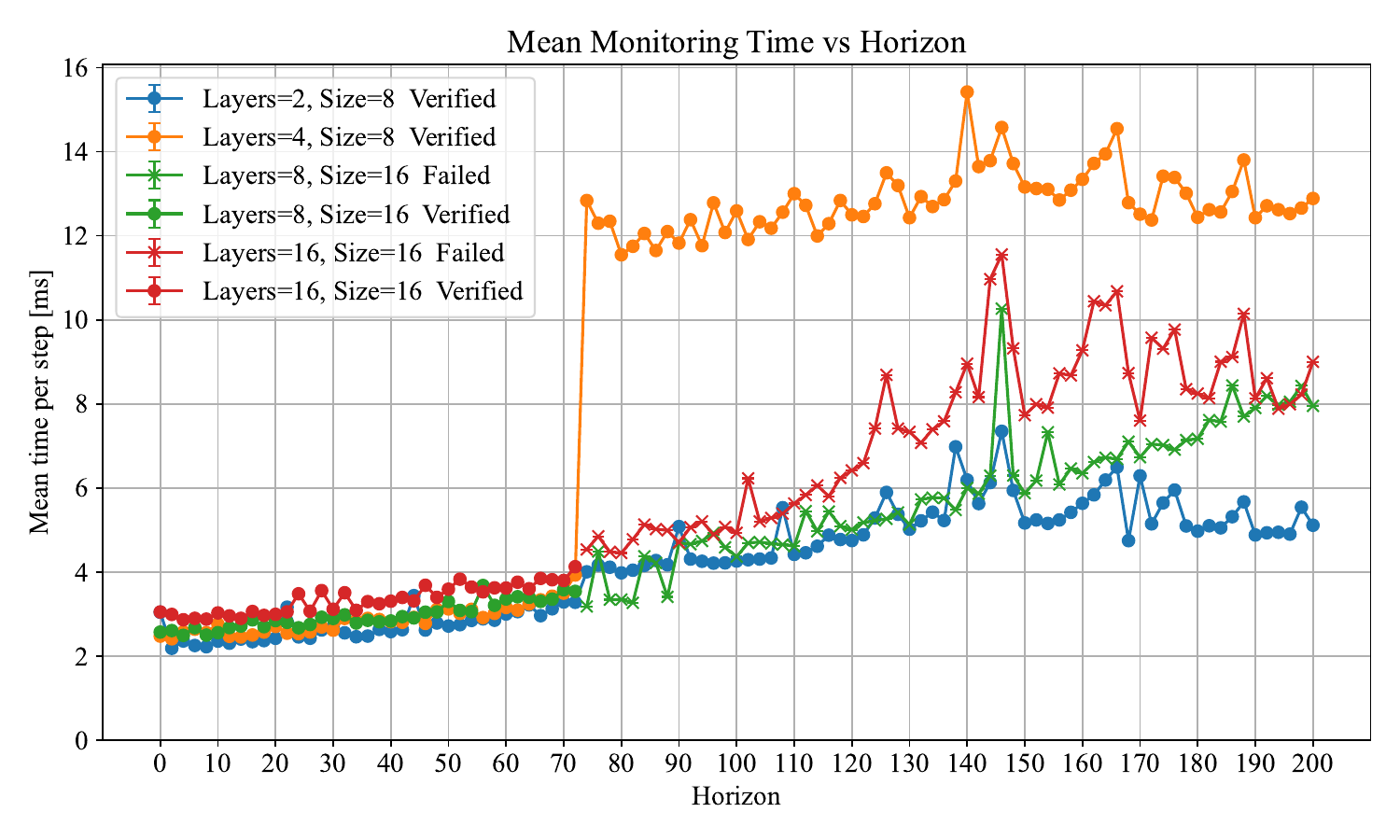}\vspace{-1em}
    \caption{Monitoring overhead measured for lookahead horizons up to 200 steps, across four network configurations with varying depth and width. Two CBFs failed static verification (one of them timed out), and our monitor successfully detected violations online in both cases, with a higher horizon. }
    \label{fig2}
\end{figure}
\begin{table}[ht]
\centering
\caption{Online verification results for various lookahead horizons. The outcome column indicates whether verification failed or succeeded. We also display the number of traces evaluated (\#traces), with a total number of 323 traces. }
\label{tab:monitoring}
\renewcommand{\arraystretch}{1.2}
\setlength{\tabcolsep}{6pt}

\begin{tabular}{ccccccccc}
\toprule
\multicolumn{4}{c}{} & \multicolumn{5}{c}{Lookahead Horizon [steps]} \\
\cmidrule(lr){5-9}
Layers & Size & Outcome & Metric & 40 & 80 & 120 & 160 & 200 \\
\midrule
2  & 8  & \checkmark & \#traces & 323 & 323 & 323 & 323 & 323 \\
   &    &            & time [ms]     & 2.52 & 2.85 & 3.27 & 3.82 & 4.06 \\
\midrule
4  & 8  & \checkmark & \#traces & 323 & 323 & 323 & 323 & 323 \\
   &    &            & time [ms]     & 2.72 & 3.64 & 4.90 & 6.69 & 8.21 \\
\midrule
8  & 16 & \checkmark & \#traces & 318 & 292 & 255 & 199 & 142 \\
   &    &            & time [ms]     & 2.76 & 3.01 & 3.35 & 3.75 & 4.01 \\
   \midrule
8  & 16 & \ding{55}  & \#traces & 5 & 26 & 63 & 113 & 164 \\
   &     &           & time [ms]     & 6.18 & 5.09 & 5.18 & 8.64 & 7.84 \\
\midrule
16 & 16 & \checkmark & \#traces & 318 & 292 & 255 & 199 & 142 \\
   &     &           & time [ms]     & 3.09 & 3.33 & 3.70 & 4.07 & 4.39 \\

\midrule
16 & 16 & \ding{55}  & \#traces & 5 & 31 & 68 & 124 & 181 \\
   &     &           & time [ms]     & 6.49 & 5.70 & 6.47 & 10.37 & 8.93 \\
\bottomrule
\end{tabular}
\end{table}

There is a trade-off between monitoring overhead and lookahead horizon, which is as expected. However, we observe that the overhead remains well within practical limits. Figures~\ref{fig1} and~\ref{fig2} report average monitoring overheads for lookahead horizons up to 200 steps (20 seconds). Across all six network configurations, the average overhead remains below 16ms per step, which is substantially lower than the 0.1s control interval. This demonstrates the practicality of our approach for real-time safety monitoring.

In Table~\ref{tab:monitoring}, we report the results of online monitoring across different network architectures with lookahead horizons. For each configuration, we summarize the number of traces evaluated and the average runtime per monitoring step. Each configuration was tested on a total of 323 traces, generated by random initial states. 
As the lookahead horizon increases, our monitor successfully detects more violations in networks that failed verification, confirming the effectiveness of lookahead in revealing unsafe behaviors. The per-step verification overhead remains low across all configurations, even for larger networks.

In addition to deployment-time use, our monitor is also well-suited for identifying counterexamples \emph{during testing}. This makes it a valuable tool for diagnosing safety and certificate violations early in the development cycle. Moreover, it can be integrated into a learner–monitor loop for iteratively repairing certificates, as proposed in~\cite{yu2025neural}. This can be achieved by detecting violations online and adding counterexamples to the training data.

\medskip
\noindent\emph{Discussion and limitations.} We consider our online monitoring framework to be complementary to static verification for both testing and deployment settings. This can also be particularly valuable in the presence of environmental uncertainties or model inaccuracies, where exhaustive offline verification may not capture all relevant behaviors. While in our case, the monitor still uses the system model for prediction, it performs verification adaptively along the concrete execution path, avoiding exhaustive offline analysis over the entire state space. A potential limitation arises when the system operates near the boundary of the certificate region. In such cases, verifying large numbers of cubes may introduce additional computational overhead, which can affect real-time performance. Moreover, if the system is already close to exiting the forward-invariant set, a fail-safe mechanism may not always react in time to prevent a safety violation. This is a fundamental challenge of runtime monitoring in general, where monitors are expected to signal warnings timely while minimizing false alarms. Finally, while our case study focuses on a linearized orbital rendezvous model and a corresponding verification procedure, the proposed framework is general. It applies to nonlinear control-affine systems, provided that suitable local abstractions and verifiers are available. The frameworks capabilities scale with the performance of either component.

\section{Related Work}
\emph{Black-box simplex architecture.} The Simplex architecture is a runtime assurance method that has been widely adopted in control applications~\cite{desai2019soter,phan2017collision,schierman2015runtime}. It consists of an active controller responsible for primary decision-making and a backup controller that can override the active controller when necessary to ensure runtime safety. Recent research has extended this framework to black-box settings, where the internal architecture of the controllers is assumed to be unknown~\cite{maderbacher2023provable,mehmood2022black}. This is a similar setting to our work where we perform on-the-fly verification, however, they do not consider certificates.

\medskip
\noindent
\emph{Runtime enforcement using certificates.} Shielding is a widely adopted runtime enforcement technique~\cite{bloem2015shield} where the shield overrides control inputs to ensure safety. 
Certificate functions, such as barrier functions, have been integrated with shielding, as demonstrated in~\cite{yin2023shield}, where the barrier functions are given in advance. However, in that setting the certificates are not formally verified thus there is no correctness guarantee. More recently, Corsi et al.~\cite{mandal2024formally} propose verification-guided shielding, which uses pre-computed verification to identify specific regions for shielding. Unlike these approaches, our method performs real-time verification over a lookahead horizon without requiring preprocessing.

\medskip
\noindent
\emph{Concolic Testing.} Related to our work is also concolic testing~\cite{sen2007concolic,DBLP:conf/hvc/Sen09,DBLP:conf/sigsoft/SenMA05}, which combines concrete and symbolic execution to explore the local state space of software programs more effectively. Our work draws a parallel to this idea by applying a similar concolic approach in the context of certificate-based control.

\medskip
\noindent
\emph{Formal verification of certificates.} 
To verify a neural certificate, we need to check whether the network satisfies certificate conditions w.r.t.\ the controlled system. Broadly speaking there are two approaches, a symbolic approach and a search approach.
In the symbolic approach the system, the network, and the certificate condition are encoded as a logical formula and solved using a SMT solver~\cite{zhao2020synthesizing,giacobbeneural,abate2021fossil}, or they are encoded as a mixed integer linear program and solved with a solver such as Gurobi~\cite{zhao2022verifying,gurobi}.
In the search approach the Lipschitz constant of a neural network is exploited to systematically search the state space, e.g., grid search, to detect potential certificate violations~\cite{DBLP:conf/tacas/ChatterjeeHLZ23,tayal2024learning}.  

\section{Conclusion}
In this work, we proposed a general framework for online verification of neural certificates that integrates partial static verification over a finite lookahead horizon. Our method is lightweight, does not require access to future control inputs, and supports modular integration of local abstraction function and certificate verifiers. This enables runtime safety assurance even in settings where static verification is infeasible.
We instantiated our framework with a verifier tailored to ReLU-based control barrier functions and demonstrated its practicality on a satellite rendezvous example. The monitor was able to effectively detect certificate violations as well as ensuring runtime safety with minimal overhead.

Our framework is general and readily applies to other types of certificate functions, such as Lyapunov functions and contraction metrics~\cite{sun2021learning}, and opens the door to further integration with runtime enforcement mechanisms such as shielding. As future work, we plan to investigate how this monitoring framework can be embedded into training loops to iteratively repair unsafe certificates. We also plan to extend our approach to probabilistic settings to account for environmental uncertainties.

\subsection*{Acknowledgement} 
This
work is supported by the European Research Council under Grant No.: ERC-2020-AdG 101020093.
\bibliographystyle{splncs04}
\bibliography{refs}
\end{document}